\documentclass[12pt,reqno]{amsart}
\usepackage[samelinetheorem,notitle]{maherart}

\usepackage{xurl}
\usepackage{booktabs}
\usepackage{tabularx}
\usepackage{amsfonts,enumerate,bm,bbm,paralist,nicefrac,comment,float,thmtools,thm-restate,comment,multirow}
\usepackage[export]{adjustbox}
\makeatletter
\newcommand\fs@boxedtop
  {\fs@boxed
   \def\@fs@mid{\vspace\abovecaptionskip\relax}%
   \let\@fs@iftopcapt\iftrue
  }
\makeatother
\floatstyle{boxedtop}
\floatname{framedbox}{Procedure}
\newfloat{framedbox}{tbp}{lob}

\setcounter{MaxMatrixCols}{10}

\hypersetup{
    pdftitle =
        {The centralizing effects of private order flow on proposer builder separation},
    pdfauthor =
        {Tivas Gupta, Mallesh M Pai and Max Resnick},
    pdfsubject=
        {OFAs + PBS is a Bad Combination}
}

\usepackage{xparse}

\DeclareDocumentCommand\Pr{ m g }{%
    \ensuremath{   \IfNoValueTF {#2}
      {\mathbb{P}\left[{#1}\right]}
      {\mathbb{P}\left[{#1}\middle\vert{#2}\right]}%
    }
}
\DeclareDocumentCommand\E{ m g }{%
    \ensuremath{   \IfNoValueTF {#2}
      {\mathbb{E}\left[{#1}\right]}
      {\mathbb{E}\left[{#1}\middle\vert{#2}\right]}%
    }
}

\newif\ifdraft
\drafttrue

\begin{document}
\raggedbottom
\title{The centralizing effects of private order flow on proposer-builder separation}
\author[Gupta]{Tivas Gupta$^\text{A}$}
\address{$^\text{A}$ Special Mechanisms Group,
\href{mailto:tivas.gupta@mechanism.org}{tivas.gupta@mechanism.org}}
\author[Pai]{Mallesh M. Pai$^\text{B}$}
\address{$^\text{B}$ Rice University and Special Mechanisms Group
\\
\href{mailto:mallesh.pai@rice.edu}{mallesh.pai@rice.edu}}
\author[Resnick]{Max Resnick $^\text{C}$}
\address{$^\text{C}$ Special Mechanisms Group
\href{mailto:max.resnick@mechanism.org}{max.resnick@mechanism.org}}
\address{\today}
\begin{abstract}
The current Proposer-Builder Separation (PBS) equilibrium has several builders with different backgrounds winning blocks consistently. This paper considers how that equilibrium will shift when transactions are sold privately via order flow auctions (OFAs) rather than forwarded directly to the public mempool. We discuss a novel model that highlights the augmented value of private order flow for integrated builder searchers. We show that private order flow is complementary to top-of-block opportunities, and therefore integrated builder-searchers are more likely to participate in OFAs and outbid non integrated builders. They will then parlay access to these private transactions into an advantage in the PBS auction, winning blocks more often and extracting higher profits than non-integrated builders. To validate our main assumptions, we construct a novel dataset pairing post-merge PBS outcomes with realized 12-second volatility on a leading CEX (Binance). Our results show that integrated builder-searchers are more likely to win in the PBS auction when realized volatility is high, suggesting that indeed such builders have an advantage in extracting top-of-block opportunities. Our findings suggest that modifying PBS to disentangle the intertwined dynamics between top-of-block extraction and private order flow would pave the way for a fairer and more decentralized Ethereum. 


\end{abstract}


\maketitle

\section{Introduction}\label{sec:intro}
Most Ethereum blocks today are built by specialized \textit{builders} rather than validators. In every slot, builders gather transactions and assemble them into blocks. They then compete against each other in an ascending price (English) auction for the right to have the block they assembled proposed by the proposer. Whichever builder bids the highest wins the Proposer-Builder Separation (PBS) auction, and pays their bid to the proposer.  

The right to build a block is valuable for several reasons, most obviously because users pay \textit{tips} for inclusion. Presently these tips make up only a small portion of the total value from building a block. A majority of the value from building a block comes from the builder exploiting \emph{MEV opportunities}. MEV (Maximal Extractable Value) refers to additional value that can be exploited from strategically reordering or including specific transactions. 

Current MEV opportunities on Ethereum can be broadly segmented into two categories: \textit{top-of-block} and \textit{block body}. Let us describe each in turn. Top-of-block opportunities are primarily CEX/DEX arbitrage: exploiting price divergences of a token between a centralized exchange (CEX) and some on-chain Decentralized Exchange (DEX) operated by a smart contract, e.g., Uniswap. Intuitively, successfully exploiting such a price divergence requires both priority access to the first few transactions in the block on-chain, and also high quality execution on the centralized exchange. The latter requires high-frequency trading (HFT) strategies and low CEX transaction fees.

Block-body opportunities are typically frontrunning attacks that involve sandwiching user transactions or executing user orders against each other to cut out the liquidity providers. The value of the Block-body is primarily dictated by access to transactions. Historically, most transactions have been forwarded to the public mempool, meaning all block builders have access to the same transactions; however, some builders have access to private order flow which is not available in the public mempool. The availability of private order flow is likely to be further supplemented in the near future by the advent of order flow auctions (OFAs), venues where order flow providers (wallets) sell the exclusive right to execute their users' transactions. 

This paper focuses on the complementarity between top-of-block and block body opportunities. In particular, the PBS auction makes no distinction between top-of-block and block body, instead, the right to build the entire block is sold wholesale. This means that an advantage in top-of-block extraction capability can help secure value from the body of the block and vice versa.


This paper makes two contributions. First, we demonstrate empirically that builders operated by high-frequency trading firms are superior at capturing the top of block opportunities. Second, we construct a simple model of proposer-builder separation and demonstrate that, in this model, private order flow is more valuable to vertically integrated builder searchers than non-integrated builders.  Our theoretical results therefore imply that private order flow markets are likely to be dominated by these firms. 

Let us now describe our analysis and results in a little more detail. The main assumption in our subsequent theoretical analysis is that some bidders are stochastically advantaged at extracting top-of-block opportunities. We validate this assumption empirically. In particular, we construct a unique dataset that combines roughly a month PBS auction outcome data, i.e., which builder won which blocks over the course of a month; paired with detailed price data on a major CEX, namely, Binance. Our empirical strategy posits that the realized 12-second volatility of ETH on Binance is plausibly exogenous. Therefore the realized volatility will generate blocks that have varying top of block value. A block in which the price on Binance is flat over the previous 12 seconds will have almost no top of block value, meaning any advantage that builder searchers have at extracting from the top of the block will be irrelevant. In contrast, if the price shift is large in that period, winning the block becomes should be far more valuable to builders who excel at top-of-block extraction. Our results show that when absolute log price change on Binance in a 12 second period is large, builder-searchers operated by HFTs are far more likely to win. These results are statisitcally significant, invalidating the null hypothesis that all builders are roughly equivalent in their top-of-block extraction capability.  

Having demonstrated this, we turn to a theoretical model which explores the centralizing effects of this top-of-block advantage on the equilibrium of the PBS auction. Our model considers a simple abstraction where a block can contain at most two transactions. 

In the first stage of our model, the builders gather block body opportunities. We consider two scenarios. In the first, this transaction is sourced from the public mempool, i.e. all builders have (free) access to the block-body transactions. This models relatively well the current state of affairs. The second scenario envisages builders purchasing transactions in an OFA, which models the plausible scenario we are moving towards. In the second stage of our model, the builders combine their block body transactions with their top-of-block transactions to form a block and then compete with each other in an English (i.e., ascending) auction for the right to append their block onto the chain.

Our results show that advantages in top of block extraction capabilities are magnified when private order flow is available, in comparison with the current scenario where block body opportunities are available in the public mempool or otherwise shared with all builders. In particular, a builder with an advantage, be it deterministic or stochastic, at extracting top-of-block opportunities, will win the OFA. With access to the private transactions, it will then win the PBS auction more often, and have higher profits, than it would have in the counterfactual world without OFAs/ private transactions.

Our results suggest a troubling centralizing tendency of PBS when private order flow is available via an Order Flow Auction, a setting we are moving towards. In particular, a small number of integrated builder-searchers who have top-of-block extraction capabilities will dominate both the OFAs and the downstream PBS auction. This contrasts with the popular idea that the OFAs and the PBS auction will squeeze proposer profits between the validators (who earn the PBS auction revenue) and the order flow providers (who earn the OFA revenue). This also contrasts with the original goal of PBS which was to keep block building decentralized. 

Our results therefore provide a further impetus for various initiatives to ``unbundle'' PBS--- unbundling PBS in some form is necessary to prevent concentration into a few integrated builder-searchers. Previous work has focused on limiting the power of builders to build blocks by imposing certain constraints on them: see e.g. the recent works of \cite{buterin2022constraint}, \cite{monnot2022pepc}. There have also been studies on the possibility of implementing blockspace futures (see, e.g., \cite{ma2022}), which would effectively partially disintermediate the builder by guaranteeing inclusion for some transactions.

\section{Related Literature}

Although Proposer Builder Separation has only recently become the dominant method of building blocks on Ethereum, research into PBS dynamics is active with several recent developments. \cite{flashbots-order-flow-auctions-and-centralisation} discusses the potential for exclusive order flow to centralize the builder market. \cite{order-flow-auctions-and-centralisation-II} surmised that order flow auctions could potentially alleviate the centralizing effects of private order flow by providing a level playing field for all builders to bid in; however, as we discuss later, our results suggest that order-flow auctions may still have a centralizing effect on PBS because builders with advantages in top of block extraction may come to dominate the auction, resulting in equilibria that look similar to those where private order flow is purchased by a single entity. \cite{Gosselin2023} catalogued several competing order-flow auction designs and outlined their respective advantages and disadvantages. 

A series of articles have provided visibility into the current status of the MEV supply chain. \cite{heimbach2023ethereums} discussed to what degree relays (which forward PBS bids to the proposer) have lived up to their promises about which types of transactions and MEV strategies are allowed in their blocks. \cite{wahrstätter2023time} catalogues the current state of the PBS market, noting which builders submit to which relays, and providing insight into the total revenue from the PBS system so far. \cite{schwarzschilling2023time} discusses how proposers who participate in MEV boost could raise their revenue by delaying block proposal, allowing more bids to come in before choosing a winner. \cite{builder-dominance-and-searcher-dependence} uses proprietary data provided by Titan builder to demonstrate that top builders have more order flow than other builders and that this is a large factor contributing to their dominance in the PBS auctions. 

Recent developments have increased our understanding of CEX/DEX arbitrage also known as \textit{loss-versus-rebalancing} (LVR) or stale order sniping. \cite{milionis2022LVR} proposed the definition for LVR as the loss that the pool incurred relative to a perfect re-balancing portfolio. This quantity can also be thought of as the expected profit of top of block CEX/DEX arbitrage bots. \cite{milionis2023Fees} extended this analysis from continuous time with no fees to discrete time with fees, a much more realistic model. A core finding of this paper was the result that LVR grows cubically in blocktime, with smaller blocktimes leading to lower LVR. 

\section{Background}\label{sec:casestudies}

The easiest way to understand the top-of-block, block-body distinction is to look at the blocks themselves. CEX/DEX arb transactions are easily identifiable since they are large directional trades, typically in the first few slots of the block.

These CEX/DEX arb transactions are usually executed by an MEV bot contract that disproportionately lands transactions in blocks associated with the corresponding builder. For example, block 17195495,\footnote{See, e.g., \url{https://etherscan.io/block/17195495}.}, contains 182 transactions. The first 37 appear to be CEX/DEX arb transactions from an MEV bot with the address 0xA69b\dots e78C.\footnote{0xA69babEF1cA67A37Ffaf7a485DfFF3382056e78C}  These are subjectively large swaps on major pools (Uniswap, Sushiswap etc). For example, the first transaction swaps 4.265 Million USDC for 2168 wETH \footnote{\url{https://eigenphi.io/mev/eigentx/0xca8ec486cb46066b464104c1b91b3e253218dac6e9570408b66962883dcb0f28}} on the Uniswap v3 0.05\% fee pool.\footnote{\url{https://info.uniswap.org/##/pools/0x88e6a0c2ddd26feeb64f039a2c41296fcb3f5640}} The subsequent 36 are also similarly large swaps, each of the order of several hundred wETH.

Note that these CEX/DEX arbitrage transactions are not found on all blocks---for example, the preceding block, 17195494, does not contain such transactions. They typically only appear when there is high volatility in the preceding 12 seconds, and even then, the sizes tend to be much smaller than this selected block in most cases. For example, in the next block 17195496, there is only 1 CEX/DEX arb transaction from the same bot and the volume traded is only 1.2 Million USDC for 600 WETH.%
\footnote{ \url{https://eigenphi.io/mev/eigentx/0x1e82ed1b04d0a0df667f64c7f341a9924c79465e84d9c10d265e988d0818e9c5}}

In the block after that, block 17195497, the same bot has a single CEX/DEX arb transaction, swapping 272k USDT for 138 ETH.\footnote{\url{https://eigenphi.io/mev/eigentx/0x95b1e7dc5f54a5f6ca02be2e17e26e2c73eccac374f88e7451691e88dfcd8fec}} After this transaction, the rest of the block is filled with block-body opportunities. Transactions at indexes 1--4 and 11--14 are sandwich attacks.\footnote{\url{https://eigenphi.io/mev/eigentx/0x95b1e7dc5f54a5f6ca02be2e17e26e2c73eccac374f88e7451691e88dfcd8fec?tab=block}}

Block 17195497 in particular shows that builders can exploit both top-of-block and block-body opportunities in the same block. This is an important aspect of our model, and drives our results.

\section{Data and Empirical Analysis}\label{sec:data}

The driving assumption in the theoretical analysis in Section \ref{sec:theorems} is that some builders are superior at extracting value from the top-of-block opportunities. In this section, we provide empirical evidence for this assumption. We use realized price-volatility on the CEX as a plausibly exogenous instrument that affects top-of-block but not block body opportunities. In particular, price movements on the CEX create arbitage opportunities since DEX prices are by design static until the next block. Large price movements create large arbitrage opportunities, small price movements create small arbitrage opportunities. 


We obtained block-level data from Etherscan for a period corresponding to roughly a month from April 1st, 2023 to May 1st, 2023 (ETH blocks 16950609 to 17150609). We combined this data with detailed price data of ETHUSD from a leading centralized exchange (Binance) in the 12 seconds before each block was built. Price movement in this window gives us a rough estimate of the amount that can be earned through arbitrage with central exchanges for that block.

The merged data surfaces some clear patterns in builder-volatility relationships. Three builders---Manta, Rsync Builder, and Beaver Build---were identified before the analysis as likely to be better at extracting top of block MEV due to rumored connections with High Frequency Trading Firms. We show how realized volatility is related to whether or not one of these three builders constructed the block in Figure \ref{fig:HFT_volatility}.

\begin{figure}[!h]
  \centering
  \includegraphics[width=0.8\textwidth]{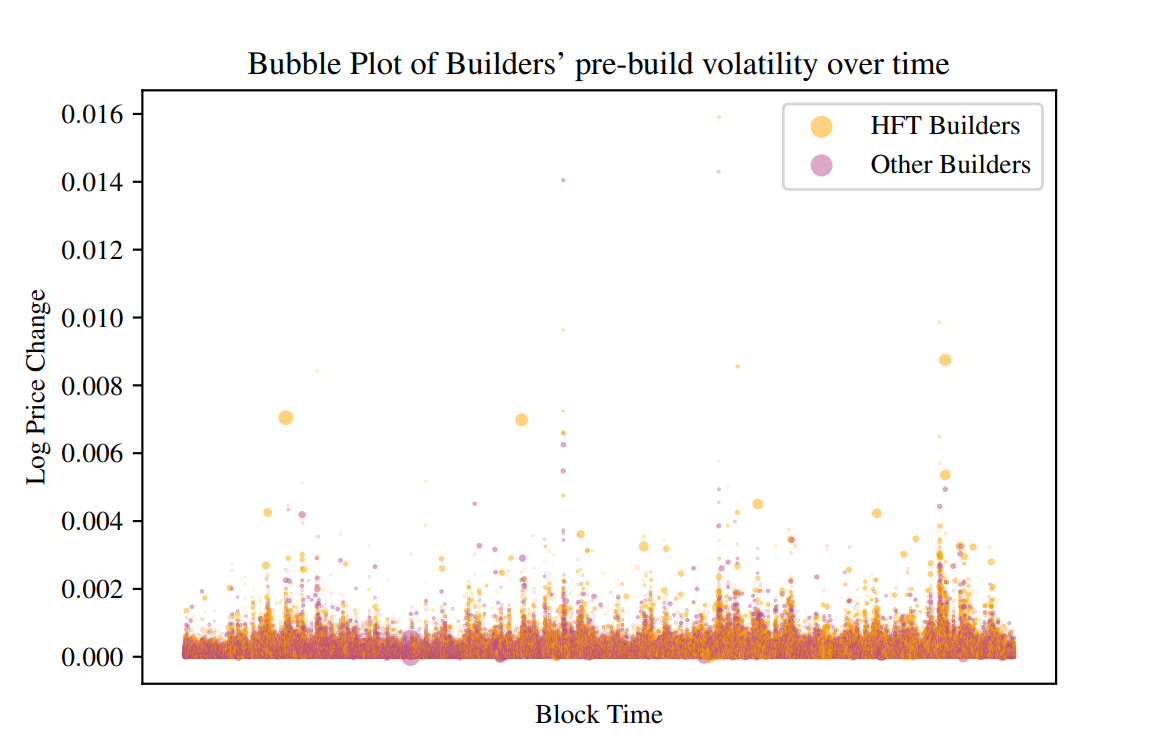}
  \caption{Blocks sorted by block time (x-axis) vs. pre-block volatility measured by log price change (y-axis)}
  \label{fig:HFT_volatility}
\end{figure}


When analyzing the most volatile blocks from each of the HFT traders, there are many large trades with Uniswap v3 pools at the top of each block (as we showed in Section \ref{sec:casestudies}). In general, the larger the realized volatility in the preceding 12 seconds, the larger these trades were. In some cases, when the realized volatility was most extreme, blocks included more than 30 CEX/DEX arbitrage transactions, with many consisting of notional sizes of millions of USD. We document blocks of several notable builders and their respective volatilities in Figure  \ref{fig:all_volatility}.

\begin{figure}[h]
  \centering
  \includegraphics[width=0.8\textwidth]{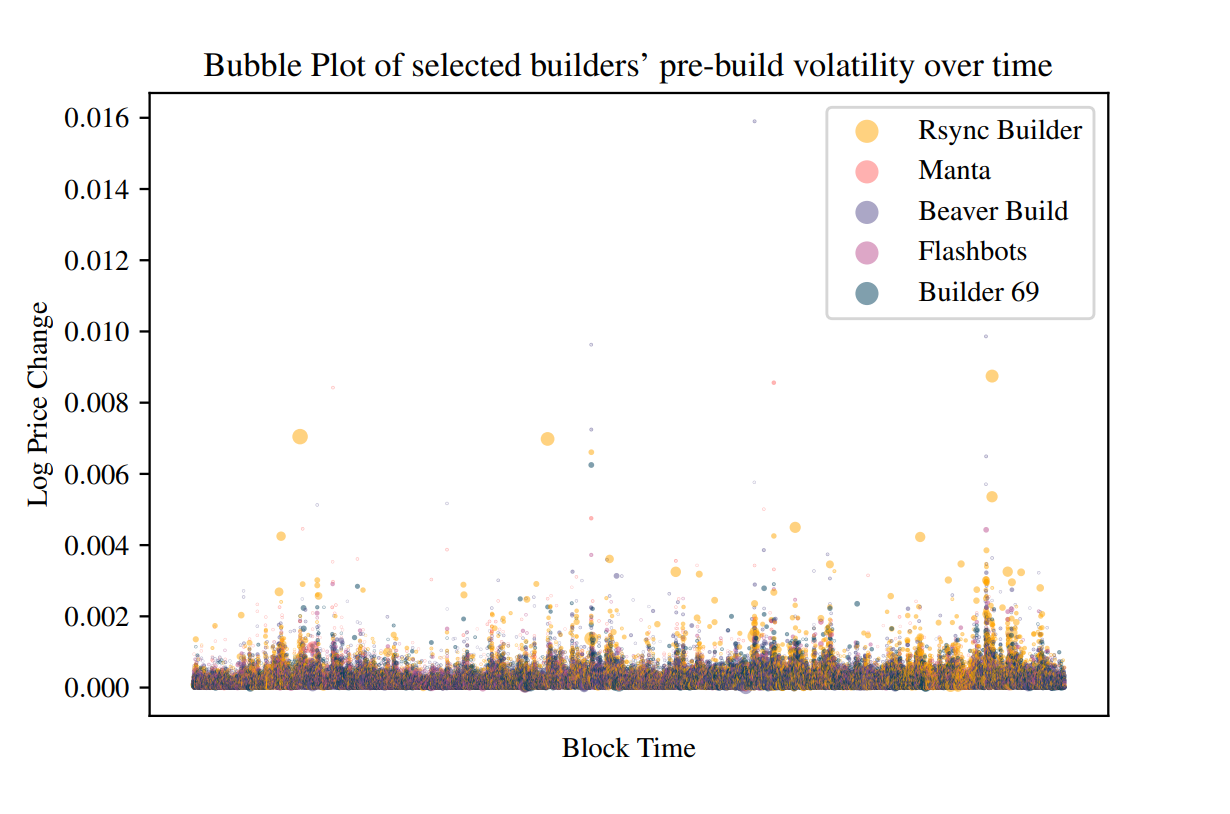}
  \caption{Selected builders' blocks sorted by block time (x-axis) vs. pre-block volatility measured by log price change (y-axis)}
  \label{fig:all_volatility}
\end{figure}

To formalize these findings, we model the relationship between CEX volatility and HFT builders winning the PBS auction. First, we grouped builders into HFT builders (Beaver, Manta, and Rsync) and non-HFT builders (everyone else). We regressed realized volatility on an indicator for whether or not one of these HFT builders won the block using a logistic regression:
\begin{equation*}
P(\text{HFT Builder} = 1 | \text{{Log Price Change}}) = \frac{{1}}{{1 + e^{-(\beta_0 + \beta_1 \cdot \text{{Log Price Change}})}}}
\end{equation*}

\begin{table}[h] 
\centering
\caption{Logistic Regression Results}\label{tab:logistic}
\begin{tabular}{llcccccc}
\toprule
\multicolumn{2}{l}{} & \textbf{Coeff} & \textbf{Std Err} & \textbf{$z$} & \textbf{P $>|z|$} & \textbf{[0.025} & \textbf{0.975]} \\
\midrule
& Log10 Price Change & 2055.151$^{***}$ & (47.584) & 43.190 & 0.000 & 1961.888 & 2148.414 \\
& const & -0.821$^{***}$ & (0.006) & -133.054 & 0.000 & -0.833 & -0.809 \\
\bottomrule
\end{tabular}
\end{table}

We find that the coefficient for the log price change predictor variable is 2055.151, with a standard error of 47.584. The significant positive relationship indicates that as the log price change increases, the odds of HFT builders winning the block also increases. Interpreting the model, when the Log10 Price Change is equal to 0 (i.e., no change) in the period before the block, the log odds of an HFT builder winning the block are -0.821. This corresponds to a probability of 0.306. If the realized volatility was 1\% The probability that an HFT builder won the block was 0.775. When the realized volatility was 2\% the probability that an HFT builder won the block was 0.964. Our analysis therefore comprehensively shows that the likelihood of the builders we preidentified as HFT builders winning the block grows as realized volatility increases. This suggests that these builders are much better than the rest of the field at extracting top-of-block value.

To identify differences between these HFT builder's capabilities, we construct a multinomial logistic regression:

\[
\text{{log}}\left(\frac{P(\text{{Builder}}_i)}{P(\text{{Builder}}_{\text{{ref}}})}\right) = \beta_{0i} + \beta_{1i}(\text{{Log Price Change}})
\]

We restrict analysis to six builders: BeaverBuild, Blocknative, Builder 69, Flashbots,  Manta, and Rsync Builder. Three of these (Beaver, Manta and Rsync) are our aforementioned HFT builders, the remaining three (Blocknative, Builder69 and Flashbots) are high-volume builders that construct a high percentage of the remaining blocks. This model analyzes how the CEX price volatility between blocks impacts the probability of one of these entities becoming the block winner. The resulting model coefficients for each builder in Table \ref{tab:mnlogit} estimate how a unit increase in Log Price Change before a block will impact the log ratio of the probability of that block being won by that particular builder vs. the probability of it being won by a builder in the reference class. While these coefficients are more difficult to interpret than simple logistic model with HFT builders, our findings show  significant relationships between increased volatility before a block and that block being won by a particular builder: the coefficients are positive and significant for our preidentified HFT builders (Beaver, Manta and Rsync) , as one might have expected given our previous results. Conversely, they are significant and negative for the high-volume, non HFT builders (Builder69 and Flashbots).%
\footnote{We note that the coefficient is positive and significant for Blocknative even though Blocknative claims (and industry participants agree) that it is not an HFT builder. A possible reason for this is that Blocknative runs their own relay and presumably collocates their builder with their relay. This could give them a latency advantage which could be influential in winning high volatility blocks. Firms using HFT-type strategies searching for a latency edge may therefore use Blocknative, resulting in the strongly positive coefficient.}

This suggests that these high-volume builders either do not compete in top-of-block extraction activity or at least are substantially less skilled relative to the HFT builders. 

\begin{table}[h] 
\centering
\caption{MNLogit Regression Results}\label{tab:mnlogit}
\begin{tabular}{llcccccc}
\toprule
\multicolumn{2}{l}{} & \textbf{coef} & \textbf{std err} & \textbf{$z$} & \textbf{P $>|z|$} & \textbf{[0.025} & \textbf{0.975]} \\
\midrule
\multicolumn{8}{l}{\textit{ Beaver Build}} \\
& const & -0.4144 & 0.009 & -45.929 & 0.000 & -0.432 & -0.397 \\
& Log10 Price Change & 1386.2014 & 71.403 & 19.414 & 0.000 & 1246.254 & 1526.149 \\
\addlinespace
\multicolumn{8}{l}{\textit{ Blocknative}} \\
& const & -2.4772 & 0.020 & -126.577 & 0.000 & -2.516 & -2.439 \\
& Log10 Price Change & 1629.2443 & 138.304 & 11.780 & 0.000 & 1358.174 & 1900.315 \\
\addlinespace
\multicolumn{8}{l}{\textit{ Builder 69}} \\
& const & 0.0152 & 0.008 & 1.799 & 0.072 & -0.001 & 0.032 \\
& Log10 Price Change & -527.4993 & 75.762 & -6.963 & 0.000 & -675.991 & -379.008 \\
\addlinespace
\multicolumn{8}{l}{\textit{ Flashbots}} \\
& const & -0.4522 & 0.010 & -46.985 & 0.000 & -0.471 & -0.433 \\
& Log10 Price Change & -458.7271 & 86.446 & -5.306 & 0.000 & -628.159 & -289.295 \\
\addlinespace
\multicolumn{8}{l}{\textit{ Manta}} \\
& const & -3.2312 & 0.023 & -137.575 & 0.000 & -3.277 & -3.185 \\
& Log10 Price Change & 3824.6414 & 104.548 & 36.583 & 0.000 & 3619.731 & 4029.551 \\
\addlinespace
\multicolumn{8}{l}{\textit{ Rsync Builder}} \\
& const & -0.6812 & 0.010 & -71.400 & 0.000 & -0.700 & -0.662 \\
& Log10 Price Change & 2093.8362 & 71.075 & 29.459 & 0.000 & 1954.532 & 2233.141 \\
\bottomrule
\end{tabular}
\end{table}

\pagebreak
\section{Model and Theoretical Analysis} \label{sec:theorems}
Having demonstrated the core assumption (that some builders are better than others at extracting value from the top of the block), we turn to a theoretical model of what the downstream effects of this advantage might be as the prevalence of private order flow increases.

We construct and study a simple static model for a single slot. In our model, a block consists of at most $2$ transactions. There is a single available block body transaction which can generate MEV (for example a swap transaction that can be sandwiched). Further, there is a single top-of-block CEX/DEX arbitrage opportunity. There are two builders, $A$ and $B$. Each of these builders competes in the PBS auction to have their block included. In practice, the PBS auction is an English auction--- we will simply consider the standard dominant strategy equilibrium of this auction, i.e., each agent stays in until their value, resulting in the highest value buyer winning at the second highest value.  

We consider two scenarios. Scenario $1$ models the current situation with little/ no private order flow, while scenario $2$ models a setting with private order flow.

\bigskip

\noindent \textbf{Scenario 1}: In this setting the block body transaction is available to both builders, for example as a bundle from a third (unmodeled) searcher. Both builders therefore have the same value for this transaction, equaling the searcher's tip which is paid to the including builder---we will denote this value as $v_T$. At the time of the PBS auction, each builder $x \in \{A,B\}$ also sees their value $v_x$ for the CEX/DEX arb. They then bid in the PBS auction, with the winning bidder's block being included. 

\bigskip

\noindent \textbf{Scenario 2}: In this setting the block body transaction is available for sale at an OFA that runs prior to the PBS auction. The value of the transaction for sale is $v_T$, commonly known among the two bidders. For simplicity we will first assume that this auction runs as a second-price auction, i.e. builders submit bids and the winner (highest bid) pays the second-highest bid. In this setting, the loser of the auction does not have access to the block body transaction. At the time of the PBS auction, each builder $x \in \{A,B\}$ also sees their value $v_x$ for the CEX/DEX arb. They then bid in the PBS transaction, with the winning bidder in this auction having their block included.

\begin{assumption}
    We will assume that for each $x \in \{A,B\}$, $v_x \sim F_x$ where $F_x$ is a CDF on $[0,1]$, and that $v_A \perp v_B$, i.e. $A$ and $B$ are independently drawn.

    Further we assume that $F_A \succ_{\text{FOSD}} F_B$, i.e., builder $A$ is stochastically better at CEX/ DEX arb than builder $B.$
\end{assumption}

Our results in this section show that the outcomes in Scenario 2, i.e., the scenario with OFAs and private order flow, overly advantage builder $A$ over builder $B$ relative to scenario $1$. 

\subsection{Baseline Results}
The basic idea is straightforward and can be easily described in a setting where $v_A$ and $v_B$ are deterministic (or equivalently, $F_A$ and $F_B$ are degenerate distributions). Without loss of generality, assume that $v_A> v_B$. 

\begin{theorem}
    In Scenario 1, suppose that $v_x$ for each of $x \in {A,B}$ is common knowledge among the builders before bidding in the PBS auction. Then the equilibrium of the PBS auction  is that A wins the PBS auction at price $v_T + v_B$.  Their total profit is therefore $v_A - v_B$. 
\end{theorem}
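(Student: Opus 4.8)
The plan is to reduce the statement to the textbook analysis of an ascending-price (English) auction under complete information, once each builder's valuation for \emph{winning} the PBS auction is pinned down. First I would note that in Scenario 1 the single block-body bundle is held by both builders, so whichever builder wins the PBS auction includes it and collects its tip $v_T$, and also realizes its own top-of-block value $v_x$; since all values are nonnegative the winner optimally includes both transactions. Hence, writing $w_x := v_T + v_x$ for $x \in \{A,B\}$, builder $x$ has value $w_x$ for having its block proposed and value $0$ otherwise, and this is common knowledge (the tip $v_T$ is common, and $v_A,v_B$ are common knowledge by hypothesis). So the PBS auction is a complete-information second-price / English auction over the value profile $(w_A, w_B)$.

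Next I would invoke the standard (weakly dominant) equilibrium of this auction, which the paper explicitly adopts as its solution concept: each builder stays in the clock until the price reaches its own value $w_x$ and drops out there. Because $v_A > v_B$, we have $w_A = v_T + v_A > v_T + v_B = w_B$, so builder $B$ exits at price $w_B = v_T + v_B$ and builder $A$ remains, winning at that price. Substituting, builder $A$'s profit is
\[
w_A - w_B = (v_T + v_A) - (v_T + v_B) = v_A - v_B,
\]
while builder $B$'s profit is $0$; this is exactly the claim.

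I do not anticipate a genuine obstacle, since the argument is a direct specialization of the complete-information English-auction benchmark; the only care needed is bookkeeping. It is worth stating as a short remark that the common tip $v_T$ enters both valuations additively, so it raises the clearing price by $v_T$ but cancels out of builder $A$'s surplus --- the profit $v_A - v_B$ coincides with that of a hypothetical world containing no block-body transaction at all. That observation is precisely the foil for Scenario 2, where the body transaction is awarded to only one builder and therefore no longer washes out of the comparison. I would also flag that ``the equilibrium'' here means the usual undominated-strategy selection, as ascending auctions admit other, non-robust equilibria that we are not considering.
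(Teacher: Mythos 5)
Your proposal is correct and matches the paper's own argument: both identify each builder's value for winning as $v_T + v_x$ and then apply the standard complete-information English-auction equilibrium, so that $A$ wins at price $v_T + v_B$ with profit $v_A - v_B$. The additional remarks about $v_T$ cancelling from the surplus are a nice observation but do not change the substance.
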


In short, the Theorem asserts that the outcome in Scenario $1$ allocates blockspace efficiently.

\begin{proof}
To see why, note that the block body transaction is available to both builders and has the same value, so the sole differentiation is in terms of their value for the top-of-block (CEX/DEX arb). The value of each bidder $x$ for winning the auction is therefore $v_T + v_x$. In the standard equilibrium of an English auction with complete information, the outcome is efficient with the high value bidder winning at the second highest price. The theorem follows. 
\end{proof}

As a first benchmark to compare this against, suppose in Scenario 2 the builders know their value for the CEX/DEX opportunity before the OFA begins.

\begin{theorem}
    In Scenario 1, suppose that the value $v_x$ for each builder $x \in {A,B}$ for top of block is common knowledge among them before bidding in the OFA. Then the overall outcome of OFA followed by PBS auction is that A wins both auctions at total price $\max (v_T + 2 v_B - v_A, v_B)$. Their total surplus is therefore $\min(2(v_A - v_B), v_A + v_T - v_B)$.  
\end{theorem}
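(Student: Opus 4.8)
The plan is to solve the two-stage game by backward induction: first pin down the PBS-auction outcome as a function of who won the OFA, then treat the OFA as a second-price auction in which, by dominant strategy, each builder bids their marginal continuation value for the block body transaction. (One should also note in passing that the statement says ``Scenario 1'' where ``Scenario 2'' is clearly intended.)

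First I would compute the PBS-stage outcomes, invoking the complete-information English-auction result from the previous theorem (efficient allocation, price equal to the second-highest value). If $A$ won the OFA, then $A$'s value is $v_T+v_A$ and $B$'s is $v_B$; since $v_T+v_A > v_B$, builder $A$ wins the PBS auction at price $v_B$ and nets $v_T+v_A-v_B$ from this stage. If instead $B$ won the OFA, then $B$'s value is $v_T+v_B$ and $A$'s is $v_A$, and the outcome splits on the sign of $v_T+v_B-v_A$: if $v_T+v_B\ge v_A$, then $B$ wins at price $v_A$ and nets $v_T+v_B-v_A$; otherwise $A$ wins at price $v_T+v_B$ and nets $v_A-v_T-v_B$. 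In either subcase the loser of the OFA nets $0$ in the PBS stage.

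Next I would compute the OFA bids. In a second-price auction the dominant strategy is to bid the difference between one's continuation payoff when winning the OFA and when losing it. Using the PBS outcomes above, $A$'s bid is $v_T+v_A-v_B$ in the regime $v_T+v_B\ge v_A$ and $2v_T$ in the regime $v_T+v_B<v_A$, while $B$'s bid is $\max(v_T+v_B-v_A,0)$ in the first regime and $0$ in the second. In both regimes $A$'s bid strictly exceeds $B$'s (the gap is $2(v_A-v_B)>0$ in the first regime and $2v_T$ in the second; ties only occur in the vacuous case $v_T=0$), so $A$ wins the OFA and pays $B$'s bid, namely $\max(v_T+v_B-v_A,0)$. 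Since $A$ then holds the transaction, $A$ wins the PBS auction at price $v_B$, so the total price $A$ pays is $\max(v_T+v_B-v_A,0)+v_B=\max(v_T+2v_B-v_A,v_B)$, and, using $v_T+v_A$ for $A$'s gross value, the total surplus is $v_T+v_A-\max(v_T+2v_B-v_A,v_B)=\min(2(v_A-v_B),\,v_A+v_T-v_B)$, as claimed.

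The step that needs the most care is the regime $v_T+v_B<v_A$: there the block body transaction is worthless to $B$ in equilibrium, because even while holding it $B$ cannot outbid $A$ in the downstream PBS auction, so $B$ bids $0$ and $A$ acquires the transaction for free. It is also worth flagging why $A$'s OFA bid in this regime is $2v_T$ rather than $v_T$: winning the OFA not only hands $A$ the value $v_T$ but also strips $v_T$ from $B$'s PBS value and hence from the price $A$ pays in the PBS stage, doubling $A$'s marginal willingness to pay — this does not change the realized OFA price (which is $0$), but it is what guarantees $A$ wins the OFA and must be stated to make the argument airtight.
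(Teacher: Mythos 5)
Your proof is correct and follows essentially the same backward-induction argument as the paper, with the same case split on the sign of $v_A - (v_B + v_T)$ and the same second-price logic in the OFA. The one refinement you add --- computing $A$'s OFA bid as $2v_T$ in the regime $v_T + v_B < v_A$, where the paper only observes that $B$ bids $0$ --- is a correct and slightly more careful justification of why $A$ wins the OFA there, and your note that the statement should read ``Scenario 2'' is also right.
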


\begin{proof}
The proof follows straightforwardly from backward induction. We can work out the willingness to pay of each party for the transaction in the OFA based on the difference in profit in the PBS auction conditional on who wins the OFA. There are two mutually exclusive, totally exhaustive cases:

\noindent \textbf{Case 1:} $v_A > v_B + v_T$. In this case, note that the winner of the PBS auction is $A$ regardless of who wins the OFA (since we already have that $v_A > v_B$). Therefore $B$ gets a $0$ surplus regardless. As a result, we have that $B$ bids $0$ in the OFA and therefore $A$ wins the transaction. Then, the PBS auction clears at a price of $v_B$ with $A$ winning the block, and the total surplus of $A$ is $v_A + v_T - v_B$. 

\noindent \textbf{Case 2:} $v_A \leq v_B + v_T$. In this case, the winner of the OFA will go ahead and win the PBS (since the value of the transaction $v_T$ plus their own value for the top-of-block opportunity combines will be larger than the competitor's value for the top-of-block opportunity). Note that if $A$ wins the OFA, then they will therefore win the PBS at a price of $v_B$ for a net surplus of $v_A + v_T - v_B$ (and $B$ will make a total surplus of $0$). Conversely, if $B$ wins the OFA, they will win the PBS for a price of $v_A$, with a net surplus of $v_B + v_T - v_A$ (and $A$ will make a total surplus of $0$). 
    
Therefore, $A$'s willingness to pay for the transaction in the OFA is $v_A + v_T - v_B$, whereas $B$ is willing to pay $v_B + v_T - v_A < v_A + v_T - v_B$ (since $v_A > v_B$ by assumption). As a result the OFA will see $A$ winning for a price of $v_B + v_T - v_A$. Combining these (the outcomes of the PBA above and the OFA here) we have the desired result. 
\end{proof}

These results already exhibit the `centralization effects' of private order flow on proposer builder separation: every additional dollar of advantage a builder has in top of block extraction translates into more than a dollars of surplus (for a small advantage, up to two dollars). In short, a builder who is already advantaged has a steeper incentive to invest in improving their advantage.

\subsection{Stochastic Top-of-Block Opportunities}

Our results carry through, \emph{mutatis mutandis}, for a more realistic model where at the time of bidding in the OFA, builders do not know the value of the top of block opportunity. Of course this applies solely to Scenario 2. In this case, builder $x$ at the stage of the OFA bids on the understanding that their top-of-block opportunity will be revealed to them later, and is distributed as $v_x \sim F_x$.  At the conclusion of the OFA, the realized top-of-block opportunity for each builder is revealed  to them, and is modeled as a private value.%
\footnote{It maybe interesting to consider the case where this value is a signal of expected top-of-block value. In this case, we may be in a setting of interdependent values as in \cite{milgrom1982theory}. We leave that study to future work.}

Suppose builder A wins the OFA. In this case, their value for the block is $v_T + v_A$, while builder B's value for the block is $v_B$. Conversely, if builder A loses the OFA, their value for the block is $v_A$ while builder B's value for the block is $V_B + v_T$. 

\begin{theorem}
Builder A's value for the transaction in the OFA, $v_{T,A}$, can be written as:
\begin{align*}
   v_{T,A} &= \int_0^\infty \int_0^{v_A} F_B(v + v_T) - F_B(v- v_T) dv  dF_A(v_A), 
\end{align*}
 with $v_{T,B}$ defined analogously.
\end{theorem}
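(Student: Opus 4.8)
The plan is to read off $v_{T,A}$ as the break-even OFA price for builder $A$ --- the price that makes $A$ indifferent, at the OFA stage, between acquiring the block-body transaction and not acquiring it. Equivalently, $v_{T,A}$ equals $A$'s expected continuation surplus in the downstream PBS auction when $A$ \emph{wins} the OFA, minus that surplus when $A$ \emph{loses} the OFA. Both continuation surpluses are immediate from the block values recorded in the paragraph preceding the statement, together with the dominant-strategy equilibrium of the English (second-price) PBS auction, in which the higher-value builder wins and pays the other builder's value. If $A$ wins the OFA, it holds block value $v_T+v_A$ against $B$'s $v_B$, so $A$'s surplus is $(v_T+v_A-v_B)^+$; if $A$ loses, it holds $v_A$ against $B$'s $v_T+v_B$, so $A$'s surplus is $(v_A-v_T-v_B)^+$. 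Since $v_A\sim F_A$ and $v_B\sim F_B$ are independent, this gives $v_{T,A}=\E{(v_T+v_A-v_B)^+}-\E{(v_A-v_T-v_B)^+}$.

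The second step is to convert this difference of truncated expectations into the displayed iterated integral. Holding $v_A$ fixed and using the elementary identity $\mathbb{E}_{v_B}\bigl[(c-v_B)^+\bigr]=\int_0^{c}F_B(t)\,\dd t$ for $c\ge 0$ --- valid because $v_B\ge 0$ almost surely, so $F_B$ vanishes on the negative half-line --- applied at $c=v_A+v_T$ and at $c=v_A-v_T$, the conditional value becomes $\int_0^{v_A+v_T}F_B(t)\,\dd t-\int_0^{(v_A-v_T)^+}F_B(t)\,\dd t$. Substituting $t=v+v_T$ in the first integral and $t=v-v_T$ in the second, and invoking $F_B\equiv 0$ on $(-\infty,0)$ to normalize both lower limits to $v=0$ and both upper limits to $v=v_A$, this should collapse to $\int_0^{v_A}\bigl(F_B(v+v_T)-F_B(v-v_T)\bigr)\,\dd v$; integrating against $F_A$, i.e.\ applying $\int_0^\infty(\cdot)\,\dd F_A(v_A)$, then yields the stated formula, and $v_{T,B}$ is obtained verbatim by swapping the labels $A$ and $B$.

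The one genuinely delicate point I expect is the bookkeeping around the two positive-part operators, especially in the regime $v_A<v_T$: there the event ``$A$ wins the PBS auction after losing the OFA'' has probability zero, and before the change of variables the lower limit of the $F_B(v-v_T)$ integral would be negative. Making all four integration limits collapse to the clean values $0$ and $v_A$ \emph{uniformly} in $v_A$ is precisely where one leans on $F_B$ being a CDF supported on $[0,1]$ --- identically $0$ to the left of $0$ and bounded above by $1$. Beyond that one needs only to invoke independence of $v_A$ and $v_B$ when splitting the expectation, together with the layer-cake identity above (atoms of $F_B$, should they occur, do not affect it); the two substitutions are then routine.
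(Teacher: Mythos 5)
Your overall strategy is sound and is essentially the paper's own decomposition computed by hand: the paper also writes $v_{T,A}=S_A^{\text{win}}-S_A^{\text{lose}}$ and obtains the two continuation surpluses from the interim winning probabilities $F_B(v_A+v_T)$ and $F_B(v_A-v_T)$ via revenue equivalence, where you instead evaluate $\E{(v_A+v_T-v_B)^+}$ and $\E{(v_A-v_T-v_B)^+}$ directly with the layer-cake identity. The gap is in the change of variables for the first integral. Substituting $t=v+v_T$ in $\int_0^{v_A+v_T}F_B(t)\,dt$ sends the lower limit $t=0$ to $v=-v_T$, and on $v\in[-v_T,0]$ the integrand is $F_B(v+v_T)=F_B(t)$ for $t\in[0,v_T]$, which is \emph{not} zero: the vanishing of $F_B$ on the negative half-line normalizes the limits of the second integral but does nothing for the first. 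Carried out correctly, your computation gives
\begin{align*}
v_{T,A}=\int_0^{v_T}F_B(t)\,dt+\int_0^\infty\int_0^{v_A}\bigl(F_B(v+v_T)-F_B(v-v_T)\bigr)\,dv\,dF_A(v_A),
\end{align*}
which exceeds the stated formula by the constant $\int_0^{v_T}F_B(t)\,dt=\E{(v_T-v_B)^+}$, strictly positive whenever $F_B$ puts mass below $v_T$.

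This residual is not an artifact of your method: it is exactly the boundary term $s_A^{\text{win}}(0)$ that the paper's proof silently sets to zero when it writes $s_A^{\text{win}}(v_A)=\int_0^{v_A}x_A^{\text{win}}(v)\,dv$. A builder with $v_A=0$ who holds the block-body transaction still has block value $v_T$ and earns expected surplus $\E{(v_T-v_B)^+}>0$ in the PBS auction, whereas $s_A^{\text{lose}}(0)=0$, so the two envelope formulas cannot both be anchored at zero. Your direct calculation, done honestly, therefore disagrees with the displayed theorem rather than proving it, and the discrepancy points at a dropped term in the paper's own argument. The omission is symmetric ($v_{T,B}$ picks up $\int_0^{v_T}F_A(t)\,dt$), and since $\int_0^{v_T}F_B(t)\,dt\ge\int_0^{v_T}F_A(t)\,dt$ under the FOSD assumption the downstream comparison $v_{T,A}\ge v_{T,B}$ survives; but as a proof of the formula as stated, the step ``normalize both lower limits to $v=0$'' is where your argument breaks, and no appeal to the support of $F_B$ repairs it.
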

\begin{proof}
Note that conditional on builder $A$'s value for top of block slot being $v_A$, their interim probability of winning the block is 
\begin{align*}
    x^{\text{win}}_A(v_A) = F_B(v_A + v_T),\\
\intertext{and analogously their probability of winning the block from losing the OFA is} 
    x^{\text{lose}}_A(v_A) = F_B(v_A - v_T). 
\end{align*}
Therefore, by the revenue equivalence theorem (see e.g., Proposition 3.1 of \cite{krishna2009auction}), the  expected surplus of builder $A$ in the PBS auction, conditional on the outcome of the OFA with a value of $V_A$ for the top of the block can be written as
\begin{align*}
&    s^{\text{win}}_A(v_A) = \int_0^{v_A} x^{\text{win}}_A (v) dv = \int_0^{v_A} F_B(v+ v_T) dv,\\
& s^{\text{lose}}_A(v_A) = \int_0^{v_A} x^{\text{lose}}_A (v) dv = \int_0^{v_A} F_B(v- v_T) dv
\end{align*}
Finally, the ex-ante expected surplus from winning can be written as:
\begin{align*}
    & S_A^{\text{win}} = \int_0^\infty s_A^{\text{win}}(v_A)  dF_A(v_A) = \int_0^\infty \int_0^{v_A} F_B(v+ v_T) dv  dF_A(v_A),\\
   \intertext{and expected surplus from losing as, }
   &S_A^{\text{lose}} = \int_0^\infty s_A^{\text{lose}}(v_A)  dF_A(v_A) = \int_0^\infty \int_0^{v_A} F_B(v- v_T) dv  dF_A(v_A).
\end{align*}
Therefore the effective valuation of builder $A$ to win the the transaction in the OFA, $v_{T,A}$ equals $S_A^{\text{win}} - S_A^{\text{lose}}$. Analogously, the valuation of builder $B$ in the transaction in the OFA equals $S_B^{\text{win}} - S_B^{\text{lose}}$.

Note that
\begin{align*}
    v_{T,A} &= S_A^{\text{win}} - S_A^{\text{lose}},\\
     &= \int_0^\infty \int_0^{v_A} F_B(v + v_T) - F_B(v- v_T) dv  dF_A(v_A),\\
 \intertext{and, analogously, } 
 v_{T,B} &=     S_B^{\text{win}} - S_B^{\text{lose}},\\
     &= \int_0^\infty \int_0^{v_B} F_A(v + v_T) - F_A(v- v_T) dv  dF_B(v_A),
\end{align*}
as desired.
\end{proof}

Finally, note that under various assumptions, it can be shown that $v_{T,A} > v_{T,B}$. For example:

\begin{corollary}
    Suppose $v_T$ is small enough so that a Taylor series approximation is appropriate. Then $v_{T,A} \geq v_{T_B}$. 
\end{corollary}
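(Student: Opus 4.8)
The plan is to expand the integrand in the formula for $v_{T,A}$ from the preceding theorem to first order in $v_T$ and read off the sign of the leading term. Write $f_A = F_A'$ and $f_B = F_B'$ for the densities, whose existence is what the hypothesis ``$v_T$ small enough that a Taylor series approximation is appropriate'' is implicitly invoking. For $v$ in the interior of $[0,1]$ a second-order Taylor expansion gives
\[
F_B(v+v_T) - F_B(v-v_T) = 2 v_T f_B(v) + O(v_T^3),
\]
and substituting this into the expression for $v_{T,A}$ and using $\int_0^{v_A} f_B(v)\,dv = F_B(v_A)$ (with $v_A \sim F_A$) yields
\[
v_{T,A} = 2 v_T \int_0^\infty F_B(v_A)\,dF_A(v_A) + O(v_T^3) = 2 v_T\,\E{F_B(v_A)} + O(v_T^3),
\]
and symmetrically $v_{T,B} = 2 v_T\,\E{F_A(v_B)} + O(v_T^3)$ with $v_B \sim F_B$. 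So it will suffice to show $\E{F_B(v_A)} \geq \E{F_A(v_B)}$; the corollary then follows for all sufficiently small $v_T$.

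For that inequality I would chain two elementary monotonicity facts. First, $F_A \succ_{\text{FOSD}} F_B$ is equivalent to $F_A(t) \leq F_B(t)$ for every $t$, so pointwise $F_B(v_A) \geq F_A(v_A)$ and hence $\E{F_B(v_A)} \geq \E{F_A(v_A)}$. Second, $F_A$ is nondecreasing, and first-order stochastic dominance is precisely the statement that $\E{g(v_A)} \geq \E{g(v_B)}$ for every nondecreasing $g$; taking $g = F_A$ gives $\E{F_A(v_A)} \geq \E{F_A(v_B)}$. Combining the two displays gives $\E{F_B(v_A)} \geq \E{F_A(v_B)}$. (Equivalently: to leading order $v_{T,A}$ is $2 v_T$ times the probability that $A$'s arbitrage draw beats an independent copy of $B$'s, which FOSD pins at $\geq \tfrac12$, while the analogous probability controlling $v_{T,B}$ is $\leq \tfrac12$.)

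The main obstacle --- and the reason the statement is hedged with ``a Taylor series approximation is appropriate'' --- will be making the remainder control uniform, not the inequality itself. The delicate region is the boundary of the support: when $v$ lies within $v_T$ of $0$ or $1$, the arguments $v \pm v_T$ leave $[0,1]$, so one of $F_B(v - v_T) = 0$ or $F_B(v + v_T) = 1$ holds and the clean expansion $2 v_T f_B(v)$ is not valid there. This boundary layer has width $O(v_T)$ and bounded integrand, so it contributes only $O(v_T^2)$, still lower order than the $2 v_T$ main term --- but this needs to be said explicitly. To be safe I would either add the mild regularity hypotheses that $F_A, F_B$ have no atom at $0$ (so $\int_0^{v_A} f_B = F_B(v_A)$ exactly) and bounded densities (so the $O(v_T^3)$ pointwise Taylor error integrates to $o(v_T)$), or else state the conclusion in the form: the coefficient of the leading power of $v_T$ in $v_{T,A} - v_{T,B}$ is nonnegative, hence $v_{T,A} \geq v_{T,B}$ once $v_T$ is below a threshold determined by those regularity constants. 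Everything after the expansion is routine.
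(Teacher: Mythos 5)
Your proof is correct and takes essentially the same route as the paper's: Taylor-expand the inner integrand to get $v_{T,A}\approx 2v_T\int F_B(v_A)\,dF_A(v_A)$, then chain the pointwise inequality $F_B\ge F_A$ with the monotone-function characterization of FOSD to conclude. Your additional remarks on controlling the remainder near the boundary of the support are a welcome tightening of a step the paper leaves implicit, but they do not change the argument.
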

\begin{proof}
    To see this note that 
    \begin{align*}
      v_{T,A}   &= \int_0^\infty \int_0^{v_A} F_B(v + v_T) - F_B(v- v_T) dv  dF_A(v_A),\\
      &\approx \int_0^\infty \int_0^{v_A} 2 v_T f_B(v) dv  dF_A(v_A)\\
      & = 2v_T \int_0^\infty F_B(v_A) f_A( v_A) dv_A.
      \intertext{By an analogous argument,}
      v_{T,B} &\approx 2v_T \int_0^\infty F_A(v_B) f_B( v_B) dv_B.
    \end{align*}
    
Since $F_A \succ_{\text{FOSD}} F_B$, we have that for all $v$, $F_A(v) \leq F_B(v)$. Therefore we have that,
\begin{align*}
    v_{T,A} & \approx 2v_T \int_0^\infty F_B(v_A) f_A( v_A) dv_A\\
    &\geq \int_0^\infty F_A(v_A) f_A( v_A) dv_A &&\text{(since } F_A \succ_{\text{FOSD}} F_B \text{)},\\
    & \geq \int_0^\infty F_A(v_A) f_B( v_A) dv_A &&\text{(since } F_A \succ_{\text{FOSD}} F_B \text{)},\\
    &\approx v_{T,B}.\qedhere
\end{align*}
\end{proof}

Note that this corollary already implies that even though the top of block opportunities are  $v_A$ and $v_B$ are stochastic, builder $A$ \emph{always} wins the OFA, since it expects better (stochastic) top of block opportunities.

Using this, we can compare winning probabilities and builder profit across the two scenarios. We  summarize our results with the following theorem:
\begin{theorem}\label{thm:compare}
    Under Scenario $1$, builder $A$ wins the block with ex-ante probability $\int_0^\infty F_B(v_A) f_A(v_A) dv_A $; whereas under OFAs with private transactions, builder $A$'s winning probability increases to\\ $\int_0^\infty F_B(v_A + v_T) f_A(v_A) dv_A$. 

    Under Scenario $1$, the total expected profit of builder $A$ is $$\int_0^\infty \int_0^{v_A} F_B(v) dv dF_A(v_A).$$

    Under Scenario $2$, the total expected profit of builder $A$ is $$(v_{T,A} - v_{T,B}) + \int_0^\infty \int_0^{v_A} F_B(v +v_T) dv dF_A(v_A).$$
\end{theorem}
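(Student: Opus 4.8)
The plan is to assemble the statement from the pieces already established, handling the winning-probability claim and the two profit claims separately. For the winning probabilities: in Scenario 1 the value of builder $x$ for the block is $v_T + v_x$, so in the English auction $A$ wins exactly when $v_A > v_B$, which has ex-ante probability $\int_0^\infty \Pr{v_B < v_A}{} \, dF_A(v_A) = \int_0^\infty F_B(v_A) f_A(v_A)\,dv_A$. In Scenario 2, the Corollary tells us $A$ \emph{always} wins the OFA, so $A$ holds the body transaction and carries value $v_T + v_A$ into the PBS auction against $B$'s value $v_B$; hence $A$ wins the block iff $v_B < v_A + v_T$, giving probability $\int_0^\infty F_B(v_A + v_T) f_A(v_A)\,dv_A$. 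Monotonicity of $F_B$ makes the second quantity at least the first, justifying the word ``increases.''

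For the Scenario 1 profit: since $A$ faces $B$ with the same additive $v_T$, the $v_T$ terms cancel in the auction and $A$'s interim surplus given $v_A$ is exactly the surplus of a bidder with value $v_A$ in a second-price auction against a rival drawn from $F_B$. By the same revenue-equivalence / envelope argument used in the proof of the preceding theorem (integrate the interim win probability $F_B(v)$ from $0$ to $v_A$), this interim surplus is $\int_0^{v_A} F_B(v)\,dv$, and integrating against $dF_A(v_A)$ gives the stated $\int_0^\infty\int_0^{v_A} F_B(v)\,dv\,dF_A(v_A)$, which is precisely $S_A^{\text{win}}$ with $v_T=0$.

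For the Scenario 2 profit: decompose $A$'s total surplus into the OFA stage and the PBS stage. In the OFA (run as a second-price auction), $A$ wins for a price equal to $B$'s bid, namely $v_{T,B}$, while the \emph{gross} value $A$ assigns to winning the transaction is $v_{T,A}$; but that gross value is already defined as the \emph{difference} $S_A^{\text{win}} - S_A^{\text{lose}}$ in continuation surplus, so one must be careful not to double-count. The clean way: $A$'s total expected surplus equals (continuation surplus from losing the OFA) $+$ (gain from winning the OFA) $-$ (price paid) $= S_A^{\text{lose}} + v_{T,A} - v_{T,B}$. Since $A$ wins the OFA with certainty, its continuation surplus conditional on that outcome is $S_A^{\text{win}} = \int_0^\infty\int_0^{v_A} F_B(v+v_T)\,dv\,dF_A(v_A)$, and using $v_{T,A} = S_A^{\text{win}} - S_A^{\text{lose}}$ we get total surplus $= S_A^{\text{lose}} + (S_A^{\text{win}} - S_A^{\text{lose}}) - v_{T,B} = S_A^{\text{win}} - v_{T,B}$. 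To match the stated form $(v_{T,A} - v_{T,B}) + \int_0^\infty\int_0^{v_A} F_B(v+v_T)\,dv\,dF_A(v_A)$, observe this equals $(v_{T,A}-v_{T,B}) + S_A^{\text{lose}} + (S_A^{\text{win}} - S_A^{\text{lose}}) = (v_{T,A}-v_{T,B}) + S_A^{\text{win}}$, i.e. $v_{T,A} - v_{T,B} + S_A^{\text{win}} = S_A^{\text{win}} + v_{T,A} - v_{T,B}$; consistency with the previous line requires $v_{T,A} = S_A^{\text{win}} - v_{T,B} + v_{T,B} - (S_A^{\text{win}} - v_{T,B})$...

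\emph{This bookkeeping is the main obstacle}: the paper's stated expression seems to implicitly treat the PBS continuation value as $S_A^{\text{win}}$ and then \emph{add} the net OFA gain $v_{T,A}-v_{T,B}$, which only reconciles with the accounting above if one interprets $v_{T,A}$ not as $S_A^{\text{win}}-S_A^{\text{lose}}$ but with a different normalization — so I would pin down precisely what ``total profit'' includes (does $A$'s PBS payment $v_B$ net against the $v_T$ it now owns? is $S_A^{\text{lose}}$ part of the baseline or not?) and then show the two forms coincide, presenting the derivation via the transparent identity (continuation-surplus-if-win) $+$ (OFA net gain), and invoking the Corollary for ``$A$ wins the OFA surely'' and revenue equivalence (Proposition 3.1 of \cite{krishna2009auction}) for each interim surplus integral. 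The routine part is then just substituting the already-derived formulas for $x^{\text{win}}_A$, $S_A^{\text{win}}$, and $v_{T,A}, v_{T,B}$.
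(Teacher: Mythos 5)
Your treatment of the winning probabilities and of the Scenario-1 profit matches the paper's proof exactly (builder $A$ always wins the OFA by the Corollary, then wins the PBS auction iff $v_B \le v_A + v_T$; interim surpluses via the revenue-equivalence integral). For the Scenario-2 profit, the paper's own proof is a one-line appeal to revenue equivalence with no explicit accounting of the OFA payment, so your decomposition goes beyond what the paper writes down --- and your suspicion about the bookkeeping is justified rather than something to be explained away. Your identity
\[
\text{total} \;=\; S_A^{\text{lose}} + \left(v_{T,A} - v_{T,B}\right) \;=\; S_A^{\text{win}} - v_{T,B}
\]
is the correct cash-flow accounting (the only payoffs are the PBS continuation surplus conditional on holding the transaction, minus the OFA price $v_{T,B}$), and it does \emph{not} equal the stated expression $(v_{T,A}-v_{T,B}) + S_A^{\text{win}}$, which exceeds it by $v_{T,A} = S_A^{\text{win}} - S_A^{\text{lose}} \ge 0$: since $v_{T,A}$ is itself defined as an increment of PBS continuation surplus, adding it to $S_A^{\text{win}}$ double-counts that increment. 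You can adjudicate between the two versions using the paper's own deterministic benchmark. In Case 2 of that theorem ($v_B < v_A \le v_B + v_T$, with $v_A, v_B > v_T$ so the truncations at zero do not bind) one has $S_A^{\text{win}} = v_A - v_B + v_T$, $S_A^{\text{lose}} = 0$, $v_{T,B} = v_B + v_T - v_A$; your formula gives $2(v_A - v_B)$, agreeing with the benchmark's surplus $\min(2(v_A-v_B),\, v_A+v_T-v_B)$, while the displayed formula gives $3(v_A - v_B) + v_T$. So do not contort the derivation to land on the displayed expression; present the direct accounting and note the discrepancy.

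One further caveat worth a sentence in your writeup: the interim surplus $s_A^{\text{win}}(v_A) = \int_0^{v_A} F_B(v+v_T)\,dv$ silently sets $S(0)=0$ in the revenue-equivalence formula, but a builder with $v_A = 0$ who owns the body transaction still has value $v_T$ for the block and earns $\int_0^{v_T} F_B(u)\,du \ge 0$; the exact continuation surplus is $\int_0^{v_A + v_T} F_B(u)\,du$, and the analogous constants propagate into $v_{T,A}$ and $v_{T,B}$. This does not affect the qualitative conclusions, but if you want formulas that reduce exactly to the deterministic benchmark for all parameter values you should carry these constants through.
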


\begin{proof}
    To see the first part, note that under scenario 1, builder A wins the block whenever their realized value for the transaction ($v_A$) exceeds builder $B$'s. 

    Under scenario 2, note that by the previous result, builder A always wins the block-body transaction in the OFA. It then therefore wins the PBS auction whenever builder B's value for the top-of-block transaction is at most $v_T$ larger than builder A's value for the same.  The formulae listed  straightforwardly represent the probability of the events described above. 

    The total expected profits then follow from revenue equivalence (see e.g. Proposition 3.1 of \cite{krishna2009auction}). In particular, recall that if a buyer of value wins the object with probability $x(v)$, their expected profit in this auction must be $S(v) = \int_0^v x(v') dv' + S(0).$ The expected profits listed above then follow straightforwardly. 
\end{proof}

By observation, the profits of builder $1$ have gone up: firstly, they make positive profit in the OFA since they are more aggressive in the OFA. Secondly, having won the OFA, they are advantaged in the PBS auction (since they have access to the private transaction to increase their value for the block, and builder B does not). Further results require us to make a functional form assumption on $F_A $ and $F_B$, which we do in the next section.

\subsection{An Analytic Example}
To better understand the effect on surplus etc, we can use the formulas above in an analytic example so that we can do some simple comparative statics. To that end suppose both $v_A$ and $v_B$ are exponentially distributed, with parameter $\lambda_A$ and $\lambda_B$ respectively. By assumption that A is the stronger builder in terms of first order stochastic dominance of top-of-block opportunities, we must have that $\lambda_A < \lambda_B$.

Note that, for each $x \in \{A,B\}$
\begin{align*}
    &F_x(v) = 1- \exp \{-\lambda_x v\},\\
    &f_x(v) = \lambda_x \exp \{-\lambda_x v\}.
\end{align*}

Therefore, substituting in, we have that:
\begin{align*}
v_{T,A} &= \int_0^\infty \int_0^{v_A} F_B(v + v_T) - F_B(v- v_T) dv  dF_A(v_A),\\
&= \int_0^\infty H_B(v_A) dF_A(v_A),\\
\intertext{where}
H_B(v_A) &= \begin{cases}
    \int_{v_T}^{v_A} F_B(v + v_T) - F_B(v- v_T) dv + \int_0^{v_T} F_B(v+v_T) dv & \text{ if } v_A > v_T\\
    \int_0^{v_A} F_B(v+v_T) dv & \text{ o.w.}
\end{cases}
\end{align*}
A mechanical but involved calculation delivers that:
\begin{align*}
    v_{T,A} = \frac{\lambda_A (1 - \exp (-v_T \lambda_B) ) + \lambda_B (1- 
 \exp (-v_T \lambda_A))}{(\lambda_A^2 + \lambda_A \lambda_B)}
 \end{align*}
And analogously $v_{T,B}$. Further it is straightforward to verify that $v_{T,A} > v_{T,B}$ (since $\lambda_A < \lambda_B$ by assumption) as desired. 

Substituting in to the formulas in Theorem \ref{thm:compare}, we have that the probability of $A$ winning rises to
\begin{align*}
    1- \frac{\exp\{- v_T \lambda_B\} \lambda_A}{\lambda_A + \lambda_B} >\frac{\lambda_B}{\lambda_A + \lambda_B}
\end{align*}
where the right hand side is the probability of $A$ winning in Scenario 1. 

Finally, note that under scenario $1$, the total expected profit of Builder $A$ is 
$$\int_0^\infty \int_0^{v_A} F_B(v) dv dF_A(v_A) = \frac{\lambda_B}{\lambda_A(\lambda_A + \lambda_B)}. $$
By comparison, under scenario 2, the total expected profit of Builder $A$ is: 
\begin{align*}
    &(v_{T,A} - v_{T,B}) + \int_0^\infty \int_0^{v_A} F_B(v +v_T) dv dF_A(v_A),\\
    =& \frac{(\lambda_B - \lambda_A)(\lambda_A (1 - \exp (-v_T \lambda_B) ) + \lambda_B (1- 
 \exp (-v_T \lambda_A)))}{\lambda_A \lambda_B (\lambda_A + \lambda_B)} + \frac{\lambda_B + \lambda_A(1 - \exp (-v_T \lambda_B))}{\lambda_A (\lambda_A + \lambda_B)}.\\
 \intertext{Therefore the difference in profit between the two scenarios is: 
}
& \frac{(\lambda_B - \lambda_A)(\lambda_A (1 - \exp (-v_T \lambda_B) ) + \lambda_B (1- 
 \exp (-v_T \lambda_A)))}{\lambda_A \lambda_B (\lambda_A + \lambda_B)} + \frac{ (1 - \exp (-v_T \lambda_B))}{(\lambda_A + \lambda_B)}
\end{align*}
Note that since each of the terms is positive, so is the sum, i.e. builder $A$'s total profit increases in Scenario $2$ relative to scenario $1$. 

These comparative statics are illustrated in Figure \ref{fig:comparative_statics}. We normalize $v_T$ to $1$, and capture the advantage of builder $A$ by varying $\frac{\lambda_A}{\lambda_A + \lambda_B}$ holding $\lambda_A + \lambda_B$ fixed. The smaller the former, the larger is builder $A$'s advantage in top of block extraction. The figure threfore demonstrates how even small advantages can be discontinuously magnified by private OFAs: it is instructive to note that even if the advantaged builder has a small advantage in top-of-block extraction, e.g., $\lambda_B  = \lambda_A + \epsilon$ for $\epsilon$ small, they have a discontinuous jump in their probability of winning the PBS auction in scenario $2$ relative to scenario $1$. This is because even a small advantage in top-of-block extraction leads to the advantaged builder always winning the OFA in Scenario $2$, which in turn gives them a discontinuous advantage in the PBS auction.

\begin{figure}[ht]
\centering
\includegraphics[width = \textwidth]{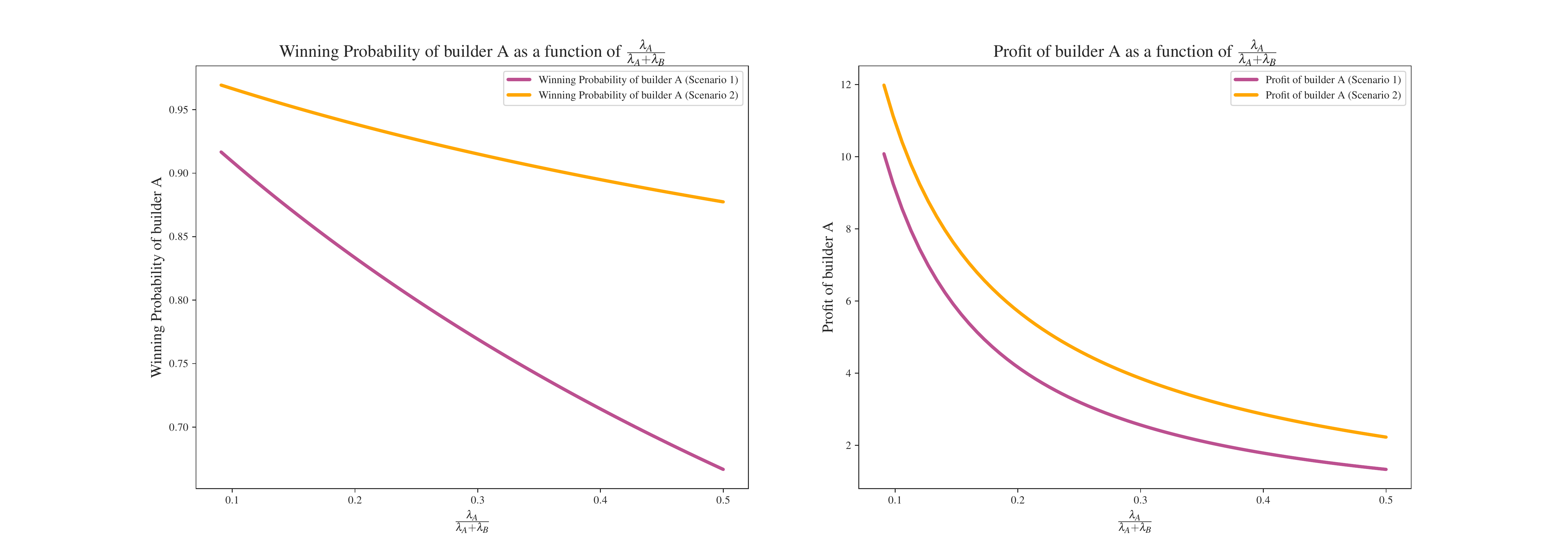}
\caption{How winning probability (left) and expected profit (right) vary across scenarios 1 and 2 as the relative advantage of Builder $A$ varies.}
\label{fig:comparative_statics} 
\end{figure}

\begin{remark}[Bundle Sharing]
    The results of this model apply in a world where builders do not share opportunities with other builders. This practice, known as bundle sharing, is rather common \cite{builder-dominance-and-searcher-dependence}. When a bundle is shared, the bundle tip sets how much of the opportunity's value is shared with the builder and how much is retained for the bundle originator. In a world with low bundle sharing friction, the centralizing effects of private order flow are diminished because the value of running a builder is low (searchers can get almost the same execution without a builder as they can if they do run a builder).

In practice there are several frictions that make bundle sharing less desirable. First, a builder sharing a bundle with another searcher can cause the competing builder to elevate their bid in this PBS auction, making it more likely that the originating builder loses the PBS auction or pays a higher price when he wins. Second, bundles submitted to other builders have higher latency, meaning decisions have to be made earlier in the slot with less information about how the prices of underlying assets will evolve. This latency effect is particularly relevant for top of block arbitrage opportunities which is why most successful top of block searchers also run their own builder. 
\end{remark}

\section{Discussion}\label{sec:conclude}
Our empirical results show that a small group of integrated builder-searchers have a demonstrable advantage in top-of-block extraction capability.   

Our theoretical model then shows that builders with superior top-of-block capabilities are likely to dominate OFAs and subsequently use the private order flow obtained in these OFAs to dominate the PBS auction. Put simply, top-of-block and block-body opportunities are complementary because the block is sold wholesale. Therefore, builders who earn more from the top of the block, will be willing to pay more for private order flow, since they need to win the whole block in order to exercise their top-of-block advantage. This complementarity is a strong centralizing force that threatens to suffocate small builders and upset the currently somewhat pluralistic builder equilibrium. 

Asking order flow originators not to participate in OFAs is futile because it is in their own best interest to do so. Similarly, builders cannot be barred from participating in OFAs. The only solution then is to modify PBS itself. 

Our results suggest that unbundling the PBS auction would be a step in the right direction. By this we mean selling the top of the block and the block-body separately. Implementing such a mechanism would reduce HFT advantage and allow alternative strategies to integrated builder searchers to compete for the right to build blocks. A more fleshed-out proposal in this direction was recently proposed in \cite{barnabe2023} in the form of PEPC-Boost, a specific instantiation of a protocol enforced proposer commitment (PEPC) in which the block is split into designated top-of-block and blockbody and these are auctioned separately.

\newpage

\bibliographystyle{aer}
\bibliography{onchain-auctions}

\end{document}